\pdfoutput=1
\documentclass{amsart}

\usepackage{amsthm}
\usepackage[margin=1cm]{caption}

\usepackage[utf8]{inputenc}
\usepackage[english]{babel}

\usepackage{braket}
\usepackage{amsmath}
\usepackage{amstext}
\usepackage{amssymb}
\usepackage{afterpage}

\usepackage{microtype}

\usepackage{graphicx}

\usepackage{units}
\usepackage{enumerate}
\usepackage{url}
\usepackage{multirow}

\usepackage{tikz}

\usepackage{hyperref}

\allowdisplaybreaks[4]

\newcommand{\declarecommand}[1]{\providecommand{#1}{}\renewcommand{#1}}

\declarecommand{\problem}[4]{%
  \par\vspace{8pt}
  \noindent\refstepcounter{theorem}\textbf{Problem \thetheorem:} #1\textbf{.}\\*[1pt]
  {\begin{tabular}{@{\qquad}lp{0.7\textwidth}}
  \textbf{Parameters:} & #2 \\
  \textbf{Instance:} & #3 \\
  \textbf{Question:} & #4
  \end{tabular}}%
  \par\vspace{8pt}
}

\newenvironment{customthm}[1]
  {\innercustomthm}
  {\endinnercustomthm}

\newcommand{\Cr}{\text{Cr}}

\newcommand{\M}{\mathcal{M}}

\newcommand{\collapsibility}{\textsc{Col\-lap\-si\-bi\-li\-ty}}
\newcommand{\coll}[2]{$(#1,#2)$-\collapsibility }

\newcommand{\Bigmid}{\;\Big|\;}

\theoremstyle{definition}
\newtheorem{theorem}{Theorem}[section]

\title{Collapsibility to a subcomplex of a given dimension is NP-complete}
\author{Giovanni Paolini}

\begin{document}

\begin{abstract}
  In this paper we extend the works of Tancer, Malgouyres and Francés, showing that \coll{d}{k} is NP-complete for $d\geq k+2$ except $(2,0)$.
By \coll{d}{k} we mean the following problem: determine whether a given $d$-dimensional simplicial complex can be collapsed to some $k$-dimensional subcomplex.
The question of establishing the complexity status of \coll{d}{k} was asked by Tancer, who proved NP-completeness of $(d,0)$ and \coll{d}{1} (for $d\geq 3$).
Our extended result, together with the known polynomial-time algorithms for $(2,0)$ and $d=k+1$, answers the question completely.
\end{abstract}

\maketitle

The final publication is available at Springer via \url{http://dx.doi.org/10.1007/s00454-017-9915-6}.

\section{Introduction} 

Discrete Morse theory is a powerful combinatorial tool which allows to explicitly simplify cell complexes while preserving their homotopy type \cite{batzies2002discrete,chari2000discrete,forman1998morse,kozlov2007combinatorial}.
This is obtained through a sequence of ``elementary collapses'' of pairs of cells.
Such a process might decrease the dimension of the starting complex, or sometimes even leave a single point (in which case we say that the starting complex was collapsible).

The problem of algorithmically recognising collapsibility, or finding ``good'' sequences of elementary collapses, has been studied extensively \cite{benedetti2014random,burton2016parameterized,eugeciouglu1996computationally,joswig2006computing,malgouyres2008determining,tancer2016recognition}.
Such problems proved to be computationally hard even for low-dimensional simplicial complexes.
For 2-dimensional complexes there exists a polynomial-time algorithm to check collapsibility \cite{joswig2006computing,malgouyres2008determining}, but finding the minimum number of ``critical'' triangles (without which the remaining complex would be collapsible) is already NP-hard \cite{eugeciouglu1996computationally}.
In dimension $\geq 3$, collapsibility to some $1$-dimensional subcomplex \cite{malgouyres2008determining} or even to a single point \cite{tancer2016recognition} were proved to be NP-complete.

In \cite{tancer2016recognition}, Tancer also introduced the general \coll{d}{k} problem: determine whether a $d$-dimensional simplicial complex can be collapsed to some $k$-dimensional subcomplex.
He showed that \coll{d}{k} is NP-complete for $k\in\{0,1\}$ and $d\geq 3$, extending the result of Malgouyres and Francés about NP-completeness of \coll{3}{1} \cite{malgouyres2008determining}.
Tancer also pointed out that the codimension $1$ case ($d=k+1$) is polynomial-time solvable as is the $(2,0)$ case.
He left open the question of determining the complexity status of \coll{d}{k} in general.

In this short paper we extend Tancer's work, and prove that \coll{d}{k} is NP-complete in all the remaining cases.

\begin{customthm}{\ref{thm:main-theorem}}
  The \coll{d}{k} problem is NP-complete for $d\geq k+2$, except for the case $(2,0)$.
\end{customthm}

To do so, we prove that \coll{d}{k} admits a polynomial-time reduction to \coll{d+1}{k+1} (Theorem \ref{thm:reduction}).
Then the main result follows by induction.
The base cases of the induction are given by NP-completeness of \coll{3}{1} (for codimension $2$) and of \coll{d}{0} (for codimension $d\geq 3$).

\section{Collapsibility and discrete Morse theory}

We refer to \cite{hatcher} for the definition and the basic properties of simplicial complexes, and to \cite{kozlov2007combinatorial} for the definition of elementary collapses.
The simplicial complexes we consider do not contain the empty simplex, unless otherwise stated.

Our focus is the following decision problem.

\problem{\coll{d}{k}}
{Non-negative integers $d > k$.}
{A finite $d$-dimensional simplicial complex $X$.}
{Can $X$ be collapsed to some $k$-dimensional subcomplex?}

We are now going to recall a few definitions of discrete Morse theory \cite{chari2000discrete,forman1998morse,kozlov2007combinatorial}, so that we can restate the \coll{d}{k} problem in terms of acyclic matchings.

Given a simplicial complex $X$, its \emph{Hasse diagram} $H(X)$ is a directed graph in which the set of nodes is the set of simplexes of $X$, and an arc goes from $\sigma$ to $\tau$ if and only if $\tau$ is a face of $\sigma$ and $\dim(\sigma) = \dim(\tau) + 1$.
We denote such an arc by $\sigma \to \tau$.
A \emph{matching} $\M$ on $X$ is a set of arcs of $H(X)$ such that every node of $H(X)$ (i.e.\ every simplex of $X$) is contained in at most one arc in $\M$.
Given a matching $\M$ on $X$, we say that a simplex $\sigma\in X$ is \emph{critical} if it does not belong to any arc in $\M$.
Finally we say that a matching $\M$ on $X$ is \emph{acyclic} if the graph $H(X)^\M$, obtained from $H(X)$ by reversing the direction of each arc in $\M$, does not contain directed cycles.

Notice that the empty set is always a valid acyclic matching, for which all simplices are critical.
See Fig.\ \ref{fig:matching} for an example of a non-trivial acyclic matching on the full triangle.

\begin{figure}[htbp]
  \begin{center}
  \begin{tikzpicture}[->]
    \node (123) at (0,0) {$\{1,2,3\}$};
    \node (12) at (-1,-1) {$\{1,2\}$};
    \node (13) at (0,-1) {$\{1,3\}$};
    \node (23) at (1,-1) {$\{2,3\}$};
    \node (1) at (-1,-2) {$\{1\}$};
    \node (2) at (0,-2) {$\{2\}$};
    \node (3) at (1,-2) {$\{3\}$};
    
    \draw (123) -- (12);
    \draw (123) -- (13);
    \draw[line width=0.5mm] (123) -- (23);
    \draw (12) -- (1);
    \draw[line width=0.5mm] (12) -- (2);
    \draw (13) -- (1);
    \draw (13) -- (3);
    \draw (23) -- (2);
    \draw (23) -- (3);
  \end{tikzpicture}
  \end{center}
  \caption{An acyclic matching on the full simplicial complex on $3$ vertices, with critical simplices $\{1,3\}$, $\{1\}$, $\{3\}$.}
  \label{fig:matching}
\end{figure}
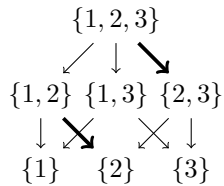

By standard facts of discrete Morse theory (see for instance \cite[Section 11.2]{kozlov2007combinatorial}), ``collapsibility to some $k$-dimensional subcomplex'' is equivalent to ``existence of an acyclic matching such that the critical cells form a $k$-dimensional subcomplex''.
Notice that, given an acyclic matching $\M$ with no critical simplices of dimension $> k$, one can always remove from $\M$ the arcs between simplices of dimension $\leq k$ and obtain an acyclic matching where the critical simplices form a $k$-dimensional subcomplex.
Therefore the collapsibility problem can be restated as follows.

\problem{\coll{d}{k} (equivalent form)}
{Non-negative integers $d > k$.}
{A finite $d$-dimensional simplicial complex $X$.}
{Does $X$ admit an acyclic matching such that all critical simplices have dimension $\leq k$?}


To simplify the proof of Theorem \ref{thm:reduction} we quote the following useful result from \cite{kozlov2007combinatorial}, adapting it to our notation.

\begin{theorem}[Patchwork theorem {\cite[Theorem 11.10]{kozlov2007combinatorial}}]
  Let $P$ be a poset.
  Let $\varphi\colon X \to P$ be an order-preserving map (where $X$ is ordered by inclusion), and assume to have acyclic matchings on subposets $\varphi^{-1}(p)$ for all $p\in P$.
  Then the union of these matchings is itself an acyclic matching on $X$.
  \label{thm:patchwork-theorem}
\end{theorem}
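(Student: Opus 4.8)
The plan is to set $\M := \bigcup_{p\in P}\M_p$, where $\M_p$ denotes the given acyclic matching on the subposet $\varphi^{-1}(p)$, and to verify the two defining properties in turn: that $\M$ is a matching on $X$ and that $\M$ is acyclic. For the first, I would note that the fibers $\varphi^{-1}(p)$ partition the simplices of $X$, and that each arc of $\M_p$ has both endpoints in $\varphi^{-1}(p)$; hence a simplex $\sigma$ can be covered only by an arc of $\M_{\varphi(\sigma)}$, and there it lies in at most one such arc because $\M_{\varphi(\sigma)}$ is a matching. Here one should observe that, since $\varphi$ is order-preserving, each fiber is order-convex in $X$ — if $\sigma\le\rho\le\tau$ with $\sigma,\tau$ in the same fiber then $\varphi(\rho)$ is squeezed between equal values, so $\rho$ is in that fiber too — and therefore a covering pair inside $\varphi^{-1}(p)$ is a covering pair of $X$, so the arcs of $\M_p$ are genuine arcs of $H(X)$. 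Thus $\M$ is a matching on $X$.

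The core of the argument is acyclicity. I would suppose toward a contradiction that $H(X)^{\M}$ contains a directed cycle $v_0\to v_1\to\cdots\to v_{\ell-1}\to v_0$. Every arc of $H(X)^{\M}$ is of one of two types: (i) an arc $\sigma\to\tau$ of $H(X)$ not in $\M$, for which $\tau$ is a codimension-one face of $\sigma$, hence $\tau<\sigma$ in $X$; or (ii) the reversal $\tau\to\sigma$ of an arc $(\sigma,\tau)\in\M_p$, for which $\varphi(\tau)=\varphi(\sigma)=p$. Applying $\varphi$ along the cycle and using that $\varphi$ is order-preserving, every step satisfies $\varphi(v_{i+1})\le\varphi(v_i)$, with equality whenever the step is of type (ii). Traversing the cycle once gives $\varphi(v_0)\ge\varphi(v_1)\ge\cdots\ge\varphi(v_{\ell-1})\ge\varphi(v_0)$, so all the $\varphi(v_i)$ coincide with a single element $p\in P$, and the whole cycle lies inside the fiber $\varphi^{-1}(p)$.

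It then remains to derive the contradiction inside that fiber. By the order-convexity remark, each type-(i) step of the cycle is an arc of the Hasse diagram of $\varphi^{-1}(p)$; each type-(ii) step is the reversal of a matching arc both of whose endpoints lie in $\varphi^{-1}(p)$, and such an arc can only belong to $\M_p$, since the other $\M_q$ involve only simplices of their own fibers. Therefore $v_0\to\cdots\to v_0$ is a directed cycle in the digraph obtained from the Hasse diagram of $\varphi^{-1}(p)$ by reversing the arcs of $\M_p$, contradicting the assumed acyclicity of $\M_p$; hence $\M$ is acyclic. I do not expect a genuine obstacle here — the whole argument is a short diagram chase — and the only point that requires a moment's care is the compatibility between the Hasse diagram of $X$ and those of the subposets $\varphi^{-1}(p)$, which is exactly what the order-convexity of the fibers provides; after that, the $\varphi$-monotonicity of directed paths in $H(X)^{\M}$ does all the work.
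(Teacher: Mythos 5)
The paper states this Patchwork theorem as an imported result from Kozlov's book and gives no proof of its own, so there is nothing to compare against line by line; your argument is correct and is essentially the standard proof found in the cited source. Both the order-convexity of the fibers (which makes the fiber Hasse diagrams induced subgraphs of $H(X)$) and the key observation that $\varphi$ is weakly decreasing along any directed path in $H(X)^{\M}$ — forcing any cycle into a single fiber, where it contradicts the acyclicity of $\M_p$ — are exactly the right ingredients.
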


Notice that the subposets $\varphi^{-1}(p)$ are not subcomplexes of $X$ in general, but they still have a well-defined Hasse diagram (the induced subgraph of $H(X)$).
Thus all the previous definitions (matching, critical simplex, acyclic matching) apply also to each subposet.

\section{Main result}

\begin{theorem}
  Let $d>k\geq 0$.
  Then there is a polynomial-time reduction from \coll{d}{k} to \coll{d+1}{k+1}.
  \label{thm:reduction}
\end{theorem}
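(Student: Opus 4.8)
The plan is to give a single explicit construction. Given a finite $d$-dimensional complex $X$, let $Y$ be its simplicial suspension: add two new vertices $a,b$ and, for every simplex $\sigma$ of $X$, the simplices $\{a\}\cup\sigma$ and $\{b\}\cup\sigma$, together with $\{a\}$ and $\{b\}$. Then $\dim Y=d+1$ and the number of simplices at most triples, so $X\mapsto Y$ is computable in polynomial time. Using the equivalent formulation of \dkcoll via acyclic matchings, it suffices to prove that $X$ admits an acyclic matching with all critical simplices of dimension $\le k$ if and only if $Y$ does with all critical simplices of dimension $\le k+1$.

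For the ``only if'' direction I would lift a matching using the Patchwork theorem. Given an acyclic matching $\M$ on $X$ with critical simplices of dimension $\le k$, take the order-preserving map $\varphi\colon Y\to\{0<1\}$ sending $\sigma$ to $0$ if $\sigma\subseteq X$ and to $1$ otherwise. Its fibre $\varphi^{-1}(0)$ is the face poset of $X$, while $\varphi^{-1}(1)$ is the disjoint union (as a poset) of the simplices containing $a$ and those containing $b$, each isomorphic to the face poset of $X$ with a minimum adjoined (by deleting $a$, resp.\ $b$). Put $\M$ on $\varphi^{-1}(0)$ and the two transported copies of $\M$ on the two parts of $\varphi^{-1}(1)$; by Theorem~\ref{thm:patchwork-theorem} the union $\widehat{\M}$ is an acyclic matching on $Y$, and a direct check shows its critical simplices are exactly the $\tau$, $\{a\}\cup\tau$, $\{b\}\cup\tau$ with $\tau$ critical for $\M$, plus $\{a\}$ and $\{b\}$ --- all of dimension $\le k+1$. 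Hence $Y$ is a yes-instance for \coll{d+1}{k+1} as well.

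For the ``if'' direction I would argue in the other direction: from an acyclic matching $\M$ on $Y$ with critical simplices of dimension $\le k+1$, produce an acyclic matching $\mathcal N$ on $X$ with critical simplices of dimension $\le k$. The idea is to ``forget the apex.'' Restricting $\M$ to the simplices containing $a$ gives an acyclic matching on a poset isomorphic to the face poset of $X$ with a minimum adjoined; deleting $a$ turns this into an acyclic matching $\mathcal N_a$ on $X$, and the simplices left critical by $\mathcal N_a$ are exactly the $\sigma$ such that $\{a\}\cup\sigma$ is either critical in $\M$ --- in which case $\dim\sigma\le k$ --- or matched with $\sigma$ itself. For these last ``bad'' simplices one uses the $b$-side: if $\{a\}\cup\sigma$ is matched with $\sigma$ then $\{b\}\cup\sigma$ is not, so $\{b\}\cup\sigma$ is either critical (again forcing $\dim\sigma\le k$, so $\sigma$ may stay critical) or matched with some $\{b\}\cup\tau$, a match that can be pushed down to pair $\sigma$ with $\tau$ in $X$. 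One then defines $\mathcal N$ by taking the $a$-side match when it exists and the $b$-side match otherwise.

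The main obstacle, I expect, will be precisely to show that $\mathcal N$ is a well-defined matching and is acyclic. The $a$-side and $b$-side prescriptions can collide (a simplex assigned a partner by the $b$-side may already be in use on the $a$-side), and splicing the two pieces together could create a directed cycle. The crucial point is that every such collision or cycle would yield a directed cycle in $H(Y)^{\M}$, contradicting its acyclicity; turning this into an argument requires a careful case analysis on the ``type'' of $\sigma$ and of the simplices $\{a\}\cup\sigma$, $\{b\}\cup\sigma$ above it --- and, should a direct combination not go through, a local rerouting of the offending matches, or a mild rigidification of $Y$ so that the bad configurations cannot arise. Granting acyclicity of $\mathcal N$, the dimension bound on its critical simplices is automatic from the bookkeeping above: a simplex is critical in $\mathcal N$ only because one of its two suspension-cofaces was critical in $\M$, hence of dimension $\le k+1$, or because it is a vertex. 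The forward direction and the polynomial bound are routine; the whole weight of the argument sits in the acyclicity of $\mathcal N$.
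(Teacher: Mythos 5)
Your forward direction is fine and essentially the paper's: cone(s) over $X$, the two-element poset, and the Patchwork theorem. The gap is exactly where you locate it yourself: the ``if'' direction is not proved, and the construction you chose (the suspension, i.e.\ two cones) does not give you the tool needed to prove it. With only two apexes $a,b$ there is no reason why either cone should be free of ``vertical'' arcs $\{a\}\cup\sigma\to\sigma$; your fallback is to splice the $a$-side and $b$-side matchings together, but you do not show the splice is a matching (the $b$-side partner $\tau$ of a bad $\sigma$ may already be claimed by the $a$-side), and combining acyclic matchings from different fibers without an order-preserving map to a poset is precisely the situation where directed cycles appear --- the Patchwork theorem gives you nothing here because the pieces you are gluing do not sit over distinct fibers. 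The phrases ``local rerouting'' and ``mild rigidification of $Y$'' are placeholders for the entire difficulty, so as written the reduction is only established in one direction; indeed it is not even clear that the biconditional holds for the suspension at all.

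The paper's construction is designed to make this step trivial: it attaches $n+1$ cones to $X$, where $n$ is the number of simplices of $X$. Each $\sigma\in X$ lies in at most one arc of $\M'$, hence can absorb a vertical arc $\sigma\cup\{w_i\}\to\sigma$ from at most one cone; by pigeonhole some index $j$ has no vertical arcs at all. For that $j$ the matching restricted to the $j$-th cone pulls back to $X$ along the Hasse-diagram injection $\sigma\mapsto\sigma\cup\{w_j\}$, acyclicity is inherited for free (a cycle in $H(X)^{\M}$ would map to one in $H(X')^{\M'}$), and the only extra critical simplices are vertices coming from pairs $\sigma\cup\{w_j\}\to\{w_j\}$. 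If you want to salvage your write-up, replace the suspension by this multi-cone complex; the quadratic blow-up in size is harmless for a polynomial-time reduction.
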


\begin{proof}
  Let $X$ be an instance of \coll{d}{k}, i.e.\ a $d$-dimensional simplicial complex. Let $V = \{v_1, \dots, v_r\}$ be the vertex set of $X$.
  Construct an instance $X'$ of \coll{d+1}{k+1}, i.e.\ a $(d+1)$-dimensional complex, as follows.
  Let $n\geq 1$ be the number of simplices in $X$.
  Introduce new vertices $w_1, \dots, w_{n+1}$, and define $X'$ as the simplicial complex on the vertex set $V' = \{v_1, \dots, v_r, w_1, \dots, w_{n+1}\}$ given by
  \[ X' = X \cup \Big\{ \sigma \cup \{w_i\} \Bigmid \sigma \in X, \; i=1,\,\dots,\,n+1 \Big\}. \]
  Then $X'$ has $n(n+2)$ simplices.
  Roughly speaking, $X'$ is obtained from $X$ by attaching $n+1$ cones of $X$ to $X$.
  We are going to prove that $X$ is a yes-instance of \coll{d}{k} if and only if $X'$ is a yes-instance of \coll{d+1}{k+1}.
  
  Suppose that $X$ is a yes-instance of \coll{d}{k}.
  Then there exists an acyclic matching $\M$ on $X$ such that all critical simplices have dimension $\leq k$.
  Construct a matching $\M'$ on $X'$ as follows:
  \begin{eqnarray*}
    \M' &=& \Big\{ \sigma \cup \{ w_1 \} \to \sigma \Bigmid \sigma \in X \Big\} \cup \\
           && \Big\{ \sigma \cup \{ w_i \} \to \tau \cup \{ w_i \} \Bigmid (\sigma \to \tau) \in \M, \; i=2,\,\dots,\,n+1 \Big\}.
  \end{eqnarray*}
  This matching corresponds to collapsing the first cone together with $X$, and every other ``base-less'' cone by itself (as a copy of $X$).
  Notice that the critical simplices of $\M'$ do not form a subcomplex of $X'$, even when the critical simplices of $\M$ form a subcomplex of $X$.  
  
  To prove that $\M'$ is acyclic, consider the set $P = \{ w_1, \dots, w_{n+1} \}$ with the partial order
  \[ w_i < w_j \, \text{ if and only if } \, i=1 \text{ and } j>1. \]
  Let $\varphi\colon X' \to P$ be the order-preserving map given by
  \[ \varphi(\sigma) =
     \begin{cases}
       w_j & \text{if $\sigma$ contains $w_j$ for some $j\geq 2$;}\\
       w_1 & \text{otherwise.}
     \end{cases}
  \]
  Then $\M'$ is a union of matchings $\M_j'$ on each fiber $\varphi^{-1}(w_j)$. The matching $\M_1'$ is acyclic on $\varphi^{-1}(w_1)$, since the arcs of $\M_1'$ define a cut of the Hasse diagram of $\varphi^{-1}(w_1)$.
  The Hasse diagram of each $\varphi^{-1}(w_j)$ for $j\geq 2$ is isomorphic to $H(X \cup \{\varnothing\})$ via the map $\sigma \cup \{w_j\} \mapsto \sigma$, and the matching $\M_j$ maps to $\M$. Since $\M$ is acyclic on $H(X)$, each $\M_j$ is also acyclic on $\varphi^{-1}(w_j)$.
  By the Patchwork theorem (Theorem \ref{thm:patchwork-theorem}), $\M'$ is acyclic on $X'$.

  The set of critical simplices of $\M'$ is
  \[ \Cr(X', \M') = \{ w_1 \} \cup \Big\{ \sigma \cup \{ w_i \} \Bigmid \sigma \in \Cr(X, \M) \cup \{\varnothing\}, \; i=2,\,\dots,\,n+1 \Big\}. \]
  In particular, all critical simplices have dimension $\leq k+1$.
  Therefore $X'$ is a yes-instance of \coll{d+1}{k+1}.
  
  Conversely, suppose now that $X'$ is a yes-instance of \coll{d+1}{k+1}.
  Let $\M'$ be an acyclic matching on $X'$ such that all critical simplices have dimension $\leq k+1$.
  Since $X$ contains $n$ simplices, and there are $n+1$ cones, there must exist an index $j\in\{1,\,\dots,\,n+1\}$ such that
  \[ \Big(\sigma \cup \{w_j\} \to \sigma \Big) \not\in \M' \quad \forall\; \sigma \in X. \]
  In other words, simplices containing $w_j$ are only matched with simplices containing $w_j$.
  Then we can construct a matching $\M$ on $X$ as follows:
  \[ \M = \Big\{ \sigma \to \tau \Bigmid \sigma, \tau \in X \text{ satisfying } \Big(\sigma \cup \{ w_j \} \to \tau \cup \{ w_j \}\Big) \in \M' \Big\}. \]
  Notice that if there is some $0$-dimensional simplex $\sigma = \{v\}\in X$ such that $(\{v, w_j\} \to \{w_j\}) \in \M'$, then $\{v\}$ is critical with respect to $\M$ (it would be matched with $\tau=\varnothing$, which does not exist in $X$).
  The Hasse diagram of $X$ injects into the Hasse diagram of the $j$-th cone via the map
  \[ \iota\colon \sigma \mapsto \sigma \cup \{ w_j \}, \]
  and by construction $\M$ maps to $\M'$.
  Since $\M'$ is acyclic, $\M$ is also acyclic.
  The set of critical simplices of $\M$ is
  \[ \Cr(X, \M) = \Big\{ \sigma \in X \Bigmid \sigma \cup \{ w_j \} \in \Cr(X', \M') \,\text{ or }\, \Big(\sigma\cup\{w_j\} \to \{w_j\}\Big) \in \M' \Big\}. \]
  In the first case $\sigma \cup \{w_j\}$ has dimension $\leq k+1$, and in the second case $\sigma$ is $0$-dimensional.
  In particular, all critical simplices have dimension $\leq k$.
  Therefore $X$ is a yes-instance of \coll{d}{k}.
\end{proof}

The \coll{d}{k} problem admits a polynomial-time solution when $d=k+1$ and also for the case $(2,0)$ \cite{joswig2006computing,malgouyres2008determining,tancer2016recognition}.
Malgouyres and Francés \cite{malgouyres2008determining} proved that \coll{3}{1} is NP-complete, and Tancer \cite{tancer2016recognition} extended this result to \coll{d}{k} for $k\in \{0,\, 1\}$ and $d\geq 3$.
Using this as the base step and Theorem \ref{thm:reduction} as the induction step, we obtain the following result.

\begin{theorem}
  The \coll{d}{k} problem is NP-complete for $d\geq k+2$, except for the case $(2,0)$. \qed
  \label{thm:main-theorem}
\end{theorem}

\section{Acknowledgements}
I would like to thank my father, Maurizio Paolini, for giving useful comments and suggesting corrections.
I would also like to thank Luca Ghidelli, for checking the proof thoroughly and for being my best man.
Finally, I would like to thank the anonymous referee for his/her suggestions.

\bibliographystyle{amsalpha-abbr}
\bibliography{bibliography}

\providecommand{\bysame}{\leavevmode\hbox to3em{\hrulefill}\thinspace}
\providecommand{\MR}{\relax\ifhmode\unskip\space\fi MR }
\providecommand{\MRhref}[2]{%
  \href{http://www.ams.org/mathscinet-getitem?mr=#1}{#2}
}
\providecommand{\href}[2]{#2}
\begin{thebibliography}{BLPS16}

\bibitem[BL14]{benedetti2014random}
B.~Benedetti and F.~H. Lutz, \emph{Random discrete {Morse} theory and a new
  library of triangulations}, Experimental Mathematics \textbf{23} (2014),
  no.~1, 66--94.

\bibitem[BLPS16]{burton2016parameterized}
B.~A. Burton, T.~Lewiner, J.~Paix{\~a}o, and J.~Spreer, \emph{Parameterized
  complexity of discrete {Morse} theory}, ACM Transactions on Mathematical
  Software (TOMS) \textbf{42} (2016), no.~1, 6.

\bibitem[BW02]{batzies2002discrete}
E.~Batzies and V.~Welker, \emph{Discrete Morse theory for cellular
  resolutions}, Journal fur die Reine und Angewandte Mathematik \textbf{543}
  (2002), 147--168.

\bibitem[Cha00]{chari2000discrete}
M.~K. Chari, \emph{On discrete {Morse} functions and combinatorial
  decompositions}, Discrete Mathematics \textbf{217} (2000), no.~1-3, 101--113.

\bibitem[EG96]{eugeciouglu1996computationally}
{\"O}.~E{\u{g}}ecio{\u{g}}lu and T.~F. Gonzalez, \emph{A computationally
  intractable problem on simplicial complexes}, Computational Geometry
  \textbf{6} (1996), no.~2, 85--98.

\bibitem[For98]{forman1998morse}
R.~Forman, \emph{Morse theory for cell complexes}, Advances in Mathematics
  \textbf{134} (1998), no.~1, 90--145.

\bibitem[Hat02]{hatcher}
A.~Hatcher, \emph{Algebraic topology}, Cambridge University Press, 2002.

\bibitem[JP06]{joswig2006computing}
M.~Joswig and M.~E. Pfetsch, \emph{Computing optimal {Morse} matchings}, SIAM
  Journal on Discrete Mathematics \textbf{20} (2006), no.~1, 11--25.

\bibitem[Koz08]{kozlov2007combinatorial}
D.~Kozlov, \emph{Combinatorial algebraic topology}, Algorithms and Computation
  in Mathematics, vol.~21, Springer, Berlin, 2008.

\bibitem[MF08]{malgouyres2008determining}
R.~Malgouyres and A.~R. Franc{\'e}s, \emph{Determining whether a simplicial
  3-complex collapses to a 1-complex is {NP}-complete}, Discrete geometry for
  computer imagery, Lecture Notes in Computer Science, vol. 4992, Springer,
  Berlin, 2008, pp.~177--188.

\bibitem[Tan16]{tancer2016recognition}
M.~Tancer, \emph{Recognition of collapsible complexes is {NP}-complete},
  Discrete \& Computational Geometry \textbf{55} (2016), no.~1, 21--38.

\end{thebibliography}

\end{document}